\documentclass[journal,twocolumn]{IEEEtran}
\usepackage{bbm}
\usepackage{array}
\usepackage{dcolumn}
\usepackage{epsfig}
\usepackage[intlimits]{amsmath}
\usepackage{amsmath, amsfonts, yhmath, bm}
\usepackage{amssymb}
\usepackage{psfrag}
\usepackage{color,soul}
\usepackage[dvipsnames]{xcolor}
\usepackage[normalem]{ulem}
\usepackage{enumerate}
\usepackage{stackengine}
\usepackage[noadjust]{cite}
\usepackage{graphicx}
\usepackage{caption}
\usepackage{subcaption}
\usepackage[font=footnotesize]{subcaption}
\usepackage{multirow}
\usepackage{cite}
\usepackage[font=footnotesize]{caption}
\usepackage{etoolbox}
\usepackage{tcolorbox}
\usepackage{float}
\usepackage{dsfont}
\usepackage{tikz}
\usetikzlibrary{arrows}

\newcommand {\myvec}[1] {{\mbox{\boldmath $#1$}}}
\newcommand {\mymat}[1]  {{\mbox{\boldmath $#1$}}}

\DeclareMathAlphabet      {\mathbfit}{OML}{cmm}{b}{it}

\newcommand {\mLambda} {\mymat{\Lambda}}

\newcommand {\mU} {\mymat{U}}

\newcommand {\I} {\mymat{I}}

\newcommand {\uxi} {\myvec{\xi}}

\newcommand {\ux} {\myvec{x}}

\newcommand {\uo} {\myvec{0}}

\newcommand {\uy} {\myvec{y}}

\newcommand {\Rset} {\mathbb{R}}
\newcommand {\Cset} {\mathbb{C}}

\newcommand {\Eset} {\mathbb{E}}

\newcommand {\herm} {\rm{H}}

\newcommand {\LS} {\mbox{\tiny LS}}
\newcommand {\GSW} {\mbox{\tiny GSW}}

\makeatletter
\newsavebox\myboxA
\newsavebox\myboxB
\newlength\mylenA

\newcommand*\mybar[2][0.75]{%
	\sbox{\myboxA}{$\m@th#2$}%
	\setbox\myboxB\null
	\ht\myboxB=\ht\myboxA%
	\dp\myboxB=\dp\myboxA%
	\wd\myboxB=#1\wd\myboxA
	\sbox\myboxB{$\m@th\overline{\copy\myboxB}$}
	\setlength\mylenA{\the\wd\myboxA}
	\addtolength\mylenA{-\the\wd\myboxB}%
	\ifdim\wd\myboxB<\wd\myboxA%
	\rlap{\hskip 0.5\mylenA\usebox\myboxB}{\usebox\myboxA}%
	\else
	\hskip -0.5\mylenA\rlap{\usebox\myboxA}{\hskip 0.5\mylenA\usebox\myboxB}%
	\fi}
\makeatother

\makeatletter
\newcommand*{\defeq}{\mathrel{\rlap{%
			\raisebox{0.3ex}{$\m@th\cdot$}}%
		\raisebox{-0.3ex}{$\m@th\cdot$}}%
	=}
\makeatother

\newtheorem{remark}{Remark}

\newtheorem{proposition}{Proposition}
\begin{document}

\title{GSW: Generalized ``Self-Wiener" Denoising}

\author{Amir~Weiss\\
         Faculty of Engineering \\
        Bar-Ilan University \\
        Ramat-Gan, Israel \\
        amir.weiss@biu.ac.il}

\maketitle

\begin{abstract}
We revisit the recently proposed ``self-Wiener" (SW) filtering method for robust deconvolution, and generalize it to the classical denoising problem. The resulting estimator, termed generalized SW (GSW) filtering, retains the nonlinear shrinkage structure of SW but introduces a tunable threshold parameter. This tunability enables GSW to flexibly adapt to varying signal-to-noise ratio (SNR) regimes by balancing noise suppression and signal preservation. We derive closed-form expressions for its mean-square error (MSE) performance in both low- and high-SNR regimes, and demonstrate that GSW closely approximates the oracle MMSE at high SNR while maintaining strong robustness at low SNR. Simulation results validate the analytical findings, showing that GSW consistently achieves favorable denoising performance across a wide range of SNRs. Its analytical tractability, parameter flexibility, and close connection to the optimal Wiener filter structure make it a promising tool for practical applications including compressive sensing, sparse signal recovery, and domain-specific shrinkage in wavelet, Fourier, and potentially learned orthonormal representations.
\end{abstract}

\begin{IEEEkeywords}
Denoising, thresholding, Wiener filter.
\end{IEEEkeywords}

\IEEEpeerreviewmaketitle

\vspace{-0.1cm}
\section{Introduction}\label{sec:intro}
Denoising remains a fundamental task in signal processing, with broad relevance across domains such as communications, imaging, and geophysical sensing (e.g., \cite{buades2005nonlocal,benesty2009noisereduction,loizou2013speechenhancement,meyr1998digital,schmelzbach2015efficient}). In many applications, one ultimately needs accurate estimates of a signal-of-interest (SOI) from noisy measurements due to physical effects of additive noise, either directly in the domain they were measured (e.g., time or space) or after a transform (e.g., Fourier or wavelet~\cite{mallat2009wavelet}) to a more convenient, or in some sense more natural, representation. Classical approaches range from linear Wiener filtering~\cite{wiener1949extrapolation}---optimal under Gaussian statistical models with \emph{known} second-order statistics---to nonlinear shrinkage and thresholding methods designed for sparse or compressible representations~\cite{donoho1994ideal,donoho1995denoising,dabov2007bm3d}.

A recent contribution in this broader space is the ``self-Wiener'' (SW) filter~\cite{weiss2021self}, originally proposed for robust deconvolution of deterministic signals. There, under an approximate independence assumption between discrete Fourier transform (DFT) components, SW applies a nonlinear thershold-type shrinkage operator to each DFT coefficient of the deconvolved signal. Analytically, it was shown that in the high signal-to-noise ratio (SNR) regime, this nonlinear operator closely approximates the ideal (but unrealizable) Wiener filter, while at low SNR it acts as an aggressive noise suppressor.

However, the original SW formulation, which was tailored to deconvolution with a known linear time-invariant system, uses a \emph{fixed} threshold value. This limits its adaptability when applied to pure denoising problems, where one typically has more freedom in choosing the operating point, and where transform-domain sparsity or compressibility can often be exploited explicitly. In particular, in low SNR regimes there is a benefit in allowing the user to bias the estimator toward stronger noise suppression, whereas at high SNR one would like to remain as close as possible to the oracle minimum mean-square error (MMSE) solution, or at least to the oracle linear MMSE solution, obtained via Wiener filtering.

In this work, we revisit the core SW idea from a denoising viewpoint and specialize it to the canonical additive white Gaussian noise (AWGN) model. The resulting estimator, which we term \emph{generalized self-Wiener} (GSW) filtering, retains the scalar, coefficient-wise nonlinear shrinkage structure of SW but introduces an explicit tunable threshold parameter. This tunability allows GSW to interpolate between an aggressively regularized, highly robust regime and a near-oracle MMSE regime, and makes it directly applicable as a drop-in shrinkage rule in generic orthonormal representations, such as Fourier, wavelet, and potentially learned transforms.

Our key contributions in this work are the following:
\begin{itemize}
  \item \emph{GSW filtering for denoising:} We introduce a tunable generalization of the SW filter~\cite{weiss2021self} for the standard AWGN denoising problem. GSW preserves the Wiener-like shrinkage structure of SW but adds flexibility in the threshold parameter that can be matched to the noise level or chosen according to classical prescriptions (e.g., universal or SURE-based thresholds~\cite{donoho1994ideal,luisier2007new}).
  \item \emph{Analytical performance characterization:} We derive closed-form expressions for the mean-square error (MSE) of GSW in both low- and high-SNR regimes and show that, for appropriate threshold choices, GSW closely tracks the oracle MMSE at high SNR while maintaining strong robustness and noise suppression at low SNR.
  \item \emph{Empirical validation and comparison:} We compare the GSW to established benchmarks such as the least-squares (LS) estimator, James--Stein (JS) shrinkage~\cite{draper1979ridge}, and soft-thresholding (ST)~\cite{donoho1995denoising} in a simulation experiment on synthetic data in a sparse vector denoising task, and demonstrate that GSW provides competitive or superior performance across a wide range of noise conditions.
\end{itemize}

The rest of the paper is organized as follows. Section~\ref{sec:problemformulation} formalizes the denoising problem. Section~\ref{sec:gws} introduces the GSW estimator and presents the MSE performance analysis in the low- and high-SNR regimes. Section~\ref{sec:simulationresults} provides numerical simulations, and Section~\ref{sec:conclusion} concludes the paper.

\section{Problem Formulation}\label{sec:problemformulation}

We consider the canonical signal-plus-noise model
\begin{equation}
    \uy = \ux + \sigma \uxi \in \Cset^{N \times 1},
\end{equation}
where $\ux \in \Cset^{N \times 1}$ is an unknown deterministic SOI, $\uxi \sim \mathcal{CN}(\uo,\I_N)$ is an standard, circularly symmetric complex normal vector with independent entries, and $\sigma \in \Rset_{+}$ denotes the noise standard deviation.\footnote{Throughout the paper, all randomness is due to the noise $\uxi$; the SOI $\ux$ is treated as deterministic.} Thus, componentwise,
\begin{equation}\label{eq:scalarmodel}
    y_n = x_n + \sigma \xi_n, \qquad n = 1,\ldots,N,
\end{equation}
with $\xi_n \sim \mathcal{CN}(0,1)$.

This model also covers transform-domain denoising under any unitary transform $\mU \in \Cset^{N \times N}$: setting $\tilde{\uy} \triangleq \mU^{\herm} \uy$ and $\tilde{\ux} \triangleq \mU^{\herm} \ux$ yields
\begin{equation}
    \tilde{\uy} = \tilde{\ux} + \sigma \tilde{\uxi},
\end{equation}
with $\tilde{\uxi} \sim \mathcal{CN}(\uo,\I_N)$ by unitary invariance. Hence, without loss of generality, we may regard $\uy$ as either a time-/space-domain vector or a collection of transform coefficients and focus on estimators that act componentwise on $\{y_n\}_{n=1}^N$.

Our goal is to estimate $\ux$ from its noisy observation $\uy$ by means of an estimator $\widehat{\ux} = \widehat{\ux}(\uy)$, and to evaluate its performance via the MSE,
\begin{align}
    \mathsf{MSE}(\ux,\widehat{\ux})
    &\triangleq \Eset\left[\|\widehat{\ux}(\uy) - \ux\|_2^2\right] \label{eq:defMSEvec} \\
    &= \sum_{n=1}^{N} \Eset\left[|\hat{x}_n(y_n) - x_n|^2\right], \label{eq:defMSEsum}
\end{align}
where the expectation is with respect to the noise $\uxi$, and in the second equality we have emphasized the scalar, componentwise form of the estimator, $\hat{x}_n = \hat{x}_n(y_n)$, that will be the focus of this work.

We also define the per-component SNR,
\begin{equation}
    \mathsf{SNR}(x_n,\sigma) \triangleq \frac{|x_n|^2}{\sigma^2},
\end{equation}
and we will use the terms ``low-SNR'' and ``high-SNR'' regimes to refer to the limits $\mathsf{SNR}(x_n,\sigma) \to 0$ and $\mathsf{SNR}(x_n,\sigma) \to \infty$, respectively.

To keep the exposition simple, we assume homoscedastic noise throughout, i.e., $\uxi \sim \mathcal{CN}(\uo,\I_N)$. The extension to a heteroscedastic noise $\uxi \sim \mathcal{CN}(\uo,\mLambda)$, with a positive-definite diagonal covariance $\mLambda \succ \uo$, is straightforward and does not alter the core analysis, provided the noise variances are known or can be estimated and appropriately normalized per component.

\section{Generalized ``Self-Wiener'' Filtering}\label{sec:gws}
In this section, we introduce GSW filtering, or the GSW estimator, for model \eqref{eq:scalarmodel}. We first define the estimator as a nonlinear shrinkage of the LS solution and then analyze its MSE in the high- and low-SNR regimes.

\subsection{Definition and Intuition}\label{subsec:gws_definition}
It is well known that, under the model $\uy = \ux + \sigma \uxi$, the LS (and
maximum-likelihood) estimator of $\ux$ is given by
\begin{equation}\label{eq:lsest}
    \widehat{\ux}_{\LS} \triangleq \uy.
\end{equation}
Building on the structure of the SW filter~\cite[Eq.~(17)]{weiss2021self}, we
define the generalized SW (GSW) estimator as a nonlinear shrinkage of the LS
estimate, parameterized by a \emph{tunable} threshold $\lambda \in [2,\infty)$. Specifically,
the GSW estimator of the $n$-th entry of $\ux$ is given by
\begin{equation}\label{eq:GSWest}
    \widehat{x}_{n,\GSW} \triangleq y_n \cdot
    \begin{cases}
        \displaystyle\frac{2 |z_n|^{-2}}{1 - \sqrt{1 - 4|z_n|^{-2}}}, & |z_n| > \lambda, \\[2ex]
        0, & |z_n| \le \lambda,
    \end{cases}
\end{equation}
where
\begin{equation}\label{eq:zn_def}
    z_n \triangleq \frac{y_n}{\sigma}
    = \frac{x_n}{\sigma} + \xi_n
    \triangleq \eta_n + \xi_n, \qquad n = 1,\ldots,N,
\end{equation}
and $\eta_n \triangleq x_n / \sigma$ is the (complex-valued) signal normalized
by the noise standard deviation. In particular, we have
\begin{equation}
    |\eta_n|^2 = \mathsf{SNR}(x_n,\sigma),
\end{equation}
thus, $|z_n|^2$ is an estimator of the SNR of the $n$-th coordinate.

One may view~\eqref{eq:GSWest} as applying a scalar shrinkage
factor $g_\lambda(\cdot)$ to the LS estimate:
\begin{equation}
    \widehat{x}_{n,\GSW}
    = \widehat{x}_{n,\LS}\cdot \, g_\lambda(|z_n|),
\end{equation}
where
\begin{equation}\label{eq:g_lambda_def}
    g_\lambda(r)
    \triangleq
    \begin{cases}
        \displaystyle\frac{2 r^{-2}}{1 - \sqrt{1 - 4 r^{-2}}}, & r > \lambda, \\[2ex]
        0, & r \le \lambda,
    \end{cases}
    \qquad r \ge 0.
\end{equation}

\begin{remark}[Relation to the original SW estimator] The original SW estimator~\cite{weiss2021self} corresponds to the special case $\lambda = 2$ in~\eqref{eq:GSWest}. This choice yields a parameter-free, robust estimator that was shown to approximate the ideal Wiener filter in the high-SNR regime for the deconvolution setting. In the general denoising problem, however, fixing $\lambda = 2$ may be suboptimal at low SNR, where stronger noise suppression can be desirable. GSW introduces $\lambda$ as a tunable parameter, allowing to trade off bias and variance across SNR regimes.
\end{remark}

The intuition behind~\eqref{eq:GSWest} is as follows. The GSW rule applies a nonlinear shrinkage factor that aggressively zeroes out low-SNR components (those for which $|z_n| \le \lambda$) while preserving and gracefully attenuating
components above the threshold. Compared to classical thresholding, GSW has a more nuanced, data-dependent shrinkage profile that preserves a Wiener-like structure for large $|z_n|$ and provides additional robustness through the tunable parameter $\lambda$. Indeed, when $|\eta_n|\to\infty$, we have $|z_n|\to\infty$ almost surely, hence
\begin{align}
    &\frac{2|z_n|^{-2}}{1-\sqrt{1-4|z_n|^{-2}}}=\frac{1}{1+|z_n|^{-2}}+O_p(|z_n|^{-4}) \\
    &\qquad \Longrightarrow \; \widehat{x}_{n,\GSW}\approx \widehat{x}_{n,\LS}\cdot\frac{\mathsf{SNR}(x_n,\sigma)}{1+\mathsf{SNR}(x_n,\sigma)},
\end{align}
    i.e., the oracle MMSE solution.\footnote{Oracle, and not realizable, since we recall that $\mathsf{SNR}(x_n,\sigma)$ is a function of $x_n$, which is unknown.}

\subsection{Asymptotic MSE Analysis}\label{subsec:gws_analysis}
We now analyze the elementwise MSE of $\widehat{x}_{n,\GSW}$ \eqref{eq:GSWest},
\begin{equation}
    \mathsf{MSE}_n(\eta_n,\lambda)
    \triangleq \Eset\left[\left|\widehat{x}_{n,\GSW} - x_n\right|^2\right],
\end{equation}
as a function of the normalized signal $\eta_n = x_n/\sigma$ and the threshold $\lambda$. Our focus is on asymptotic expressions in the high- and low-SNR regimes, i.e., $|\eta_n| \to \infty$ and $|\eta_n| \to 0$, respectively.

\subsubsection{High-SNR regime}
In the high-SNR regime, the GSW estimator behaves as a perturbed oracle MMSE solution. The following result formalizes this behavior.

\begin{proposition}[High-SNR performance]\label{thm:highSNR_GSW}
Let $z_n$ be as in~\eqref{eq:zn_def} and define
\begin{equation}\label{eq:defofpprob}
    p_n \triangleq \Pr\left(|z_n| > \lambda\right)
    = Q_1\left(\sqrt{2} |\eta_n|, \sqrt{2} \lambda \right),
\end{equation}
where $Q_1(\cdot,\cdot)$ is the Marcum $Q$-function~\cite{corazza2002new}. Then, for $|\eta_n| \gg \lambda$,
the MSE of the estimator~\eqref{eq:GSWest} admits the expansion
\begin{equation}\label{eq:highSNR_MSE}
    \mathsf{MSE}_n(\eta_n,\lambda)
    = (1 - p_n)\,|x_n|^2 + p_n\,\sigma^2
      + \mathcal{O}\left( \frac{\sigma^2}{|\eta_n|^2} \right).
\end{equation}
\end{proposition}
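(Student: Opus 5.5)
The plan is to split the MSE according to whether the threshold is exceeded, and to write the error in normalized units. Let $A_n \triangleq \{|z_n| > \lambda\}$, so that $\Pr(A_n) = p_n$ as in~\eqref{eq:defofpprob}. On $A_n^c$ the estimator~\eqref{eq:GSWest} is identically zero, so the error there is deterministic:
\begin{equation*}
\Eset\left[|\widehat{x}_{n,\GSW}-x_n|^2 \mathbbm{1}_{A_n^c}\right] = (1-p_n)|x_n|^2 .
\end{equation*}
This term is exact and produces the first term of~\eqref{eq:highSNR_MSE}. On $A_n$ we write $\widehat{x}_{n,\GSW}-x_n = \sigma\,(z_n g_\lambda(|z_n|) - \eta_n)$. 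It therefore remains to show
\begin{equation*}
\Eset\left[|z_n g_\lambda(|z_n|)-\eta_n|^2\mathbbm{1}_{A_n}\right] = p_n + \mathcal{O}(|\eta_n|^{-2}).
\end{equation*}

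First I will record two elementary facts about the shrinkage factor. For $r>\lambda\ge 2$, rationalizing gives $g_\lambda(r) = \tfrac12\bigl(1+\sqrt{1-4r^{-2}}\bigr)$. Hence $g_\lambda(r)\in[1/2,1]$, and $g_\lambda(r) = 1 - r^{-2} + \mathcal{O}(r^{-4})$ with a constant uniform over $r>\lambda$. This yields $z_n g_\lambda(|z_n|) = z_n - 1/\bar z_n + R_n$, where $|R_n| \le C|z_n|^{-3}$, and therefore
\begin{equation*}
z_n g_\lambda(|z_n|)-\eta_n = \xi_n - \frac{1}{\bar z_n} + R_n .
\end{equation*}
Expanding the squared modulus produces a leading term $\Eset[|\xi_n|^2\mathbbm{1}_{A_n}]$, a cross term $-2\,\mathrm{Re}\,\Eset[\xi_n z_n^{-1}\mathbbm{1}_{A_n}]$, and terms quadratic in $1/\bar z_n$ and $R_n$.

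Next I will localize on the event $B_n \triangleq \{|\xi_n| < |\eta_n|/2\}$. Since $|\eta_n|\gg\lambda$, we have $B_n\subset A_n$, and on $B_n$ also $|z_n| > |\eta_n|/2$. Moreover $\Pr(B_n^c) = e^{-|\eta_n|^2/4}$, which is exponentially small.

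On $A_n\cap B_n^c$ every quantity is controlled crudely. The bound $g_\lambda\in[1/2,1]$ gives $|z_n g_\lambda - \eta_n| \le 2|\xi_n| + |\eta_n|$, so by Cauchy--Schwarz this piece contributes $\mathcal{O}\bigl(|\eta_n|^2 e^{-|\eta_n|^2/8}\bigr) = \mathcal{O}(|\eta_n|^{-2})$. The same argument handles the leading term: $\Eset[|\xi_n|^2\mathbbm{1}_{A_n}] = 1 - \Eset[|\xi_n|^2\mathbbm{1}_{A_n^c}]$, and the subtracted term is exponentially small, as is $1-p_n$. Consequently $\Eset[|\xi_n|^2\mathbbm{1}_{A_n}] = p_n + \mathcal{O}(|\eta_n|^{-2})$.

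On $B_n$ we have $|1/\bar z_n|^2 \le 4|\eta_n|^{-2}$ and $|R_n|\le 8C|\eta_n|^{-3}$, so the quadratic remainders are $\mathcal{O}(|\eta_n|^{-2})$. The one cross term that is not immediately small is $\Eset[\xi_n R_n \mathbbm{1}_{B_n}]$, but Cauchy--Schwarz bounds it by $\mathcal{O}(|\eta_n|^{-3})$.

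The main obstacle is the cross term $\Eset[\xi_n z_n^{-1}\mathbbm{1}_{B_n}]$. A naive bound gives only $\mathcal{O}(|\eta_n|^{-1})$, which is too large. I would expand on $B_n$ as a convergent geometric series,
\begin{equation*}
\frac{1}{z_n} = \frac{1}{\eta_n}\sum_{k\ge0}\left(-\frac{\xi_n}{\eta_n}\right)^k .
\end{equation*}
I would then use circular symmetry: $\Eset[\xi_n^{m}\mathbbm{1}_{B_n}] = 0$ for every $m\ge1$, because $B_n$ is rotation invariant. This kills every term of $\Eset[\xi_n z_n^{-1}\mathbbm{1}_{B_n}]$, so the cross term actually vanishes, or in any case is bounded by $\mathcal{O}(|\eta_n|^{-2})$.

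Collecting the pieces and multiplying by $\sigma^2$ gives $p_n\sigma^2 + \mathcal{O}(\sigma^2/|\eta_n|^2)$ on $A_n$. Adding the exact contribution $(1-p_n)|x_n|^2$ from $A_n^c$ yields~\eqref{eq:highSNR_MSE}.
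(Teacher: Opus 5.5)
Your proof is correct. It shares the paper's skeleton: split on the threshold event, expand $g_\lambda(r)=1-r^{-2}+\mathcal{O}(r^{-4})$, and discard exponentially small tails. Where you differ is in how you handle the exceedance event.

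The paper expands three uncentered pieces, $\sigma^2\Eset[|zg_\lambda(|z|)|^2\mathbf{1}_{\mathcal{A}^{\mathrm{c}}}]$, $p|x|^2$ and $-2\sigma\Re\{x^*\Eset[zg_\lambda(|z|)\mathbf{1}_{\mathcal{A}^{\mathrm{c}}}]\}$. Each is of order $|x|^2$, and they cancel down to $\sigma^2$. That cancellation forces the paper to know $\Eset[z|z|^{-2}\mathbf{1}_{\mathcal{A}^{\mathrm{c}}}]=1/\eta^*+\mathcal{O}(|\eta|^{-3})$. The paper only asserts this estimate, pointing to a Taylor-expansion technique from the earlier SW work.

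You center first, writing $zg_\lambda(|z|)-\eta=\xi-1/\bar z+R$, so no large terms ever appear. You then remove the single delicate term $\Eset[\xi z^{-1}\mathbf{1}_{B_n}]$ by showing it is exactly zero. The tools are the uniformly convergent geometric series on $B_n=\{|\xi|<|\eta|/2\}$ and circular symmetry. Because $\xi/z=1-\eta/z$, this is the same estimate the paper needs. Your route therefore gives a self-contained proof of the step the paper leaves to citation.

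The paper's route has its own side benefit: it produces separate first- and second-moment expansions of the estimator, such as $\Eset[zg_\lambda(|z|)\mathbf{1}_{\mathcal{A}^{\mathrm{c}}}]=\eta-1/\eta^*+\mathcal{O}(|\eta|^{-3})$.

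One cosmetic point. After you localize and bound the whole integrand on $A_n\cap B_n^{c}$, the leading term should read $\Eset[|\xi|^2\mathbf{1}_{B_n}]$ rather than $\Eset[|\xi|^2\mathbf{1}_{A_n}]$. Both equal $1$ up to exponentially small terms, so the conclusion is unaffected.
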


\begin{IEEEproof}[Proof sketch]
A detailed derivation is provided in Appendix~\ref{app:highSNR_GSW}. The proof proceeds by conditioning on the events $\{|z_n| \le \lambda\}$ and $\{|z_n| > \lambda\}$, exploiting the noncentral $\chi^2$ distribution of $|z_n|^2$, and performing a high-SNR expansion in powers of $|\eta_n|^{-1}$.
\end{IEEEproof}

The interpretation of Proposition~\ref{thm:highSNR_GSW} is that, at high SNR, the GSW estimate closely approximates the oracle MMSE solution. Indeed, $Q_1\left(\sqrt{2}|\eta_n|,\sqrt{2}\lambda\right) \to 1$ as $|\eta_n| \to \infty$,
so the leading term in~\eqref{eq:highSNR_MSE} tends to $\sigma^2$, which is the MSE of the LS estimator in this deterministic setting, and the asymptotic MSE of the oracle MMSE solution. The probability of erroneously shrinking the coefficient to zero vanishes exponentially fast with $|\eta_n|^2$, and the term $\mathcal{O}(\sigma^2/|\eta_n|^2)$ becomes negligible.

\subsubsection{Low SNR regime}
In the low SNR regime, the dominant challenge is noise suppression. The next result characterizes the residual noise variance induced by the GSW rule \eqref{eq:GSWest}.

\begin{proposition}[Low-SNR performance]\label{thm:lowSNR_GSW}
For entries with low SNR, i.e., $|\eta_n| \ll 1$, the MSE of the GSW estimator
satisfies
\begin{equation}\label{eq:lowSNR_MSE}
    \mathsf{MSE}_n(\eta_n,\lambda)
    = |x_n|^2 + \rho_{\GSW}(\lambda)\,\sigma^2
      + \mathcal{O}\left(|\eta_n|\right),
\end{equation}
where $p_n = \Pr(|z_n| > \lambda)$ as in \eqref{eq:defofpprob} and
\begin{equation}\label{eq:rho_GSW_complex}
    \rho_{\GSW}(\lambda)
    \triangleq \left( \frac{\lambda^2 - 1}{2} \right) e^{-\lambda^2}
    + \frac{1}{2} \int_{\lambda^2}^{\infty} \sqrt{t^2 - 4t}\, e^{-t}\, \mathrm{d}t.
\end{equation}
\end{proposition}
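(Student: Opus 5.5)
The plan is to expand the square and handle the three resulting terms separately:
\begin{equation*}
\mathsf{MSE}_n(\eta_n,\lambda)=|x_n|^2-2\,\mathrm{Re}\bigl\{x_n^*\,\Eset[\widehat{x}_{n,\GSW}]\bigr\}+\Eset\bigl[|\widehat{x}_{n,\GSW}|^2\bigr].
\end{equation*}
The first term is already the leading $|x_n|^2$ in~\eqref{eq:lowSNR_MSE}. I will show that the third term equals $\rho_{\GSW}(\lambda)\sigma^2$ at $\eta_n=0$ plus a small correction, and that the cross term is small.

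First I would simplify the shrinkage factor. Write $t=r^2$ and $s=\sqrt{1-4/t}$. Rationalizing~\eqref{eq:g_lambda_def} gives, for $r>\lambda$,
\begin{equation*}
g_\lambda(r)=\frac{2/t}{1-s}=\frac{(2/t)(1+s)}{4/t}=\frac{1+s}{2}.
\end{equation*}
Since $\widehat{x}_{n,\GSW}=\sigma z_n g_\lambda(|z_n|)$, this yields
\begin{equation*}
|\widehat{x}_{n,\GSW}|^2=\sigma^2\, t\, g_\lambda^2=\frac{\sigma^2}{2}\bigl(t-2+\sqrt{t^2-4t}\bigr)\,\mathbb{1}\{t>\lambda^2\}.
\end{equation*}

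Next I evaluate the third term at $\eta_n=0$. There $z_n=\xi_n$, so $t=|\xi_n|^2\sim\mathrm{Exp}(1)$ with density $e^{-t}$. The polynomial part integrates to
\begin{equation*}
\tfrac12\int_{\lambda^2}^\infty(t-2)e^{-t}\,\mathrm{d}t=\tfrac12(\lambda^2+1-2)e^{-\lambda^2}=\tfrac{\lambda^2-1}{2}e^{-\lambda^2}.
\end{equation*}
The square-root part is exactly the remaining integral in~\eqref{eq:rho_GSW_complex}. Hence $\Eset[|\widehat{x}_{n,\GSW}|^2]\big|_{\eta_n=0}=\rho_{\GSW}(\lambda)\sigma^2$.

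The cross term vanishes at $\eta_n=0$ by circular symmetry: $\xi_n g_\lambda(|\xi_n|)$ has a distribution invariant to phase rotation, so its mean is zero. For small $\eta_n$, write $\Eset[z_n g_\lambda(|z_n|)]$ as the integral of $w\, g_\lambda(|w|)$ against the shifted Gaussian density $\pi^{-1}e^{-|w-\eta_n|^2}$. Differentiating under the integral in $\eta_n$ is justified by dominated convergence, because $|w g_\lambda(|w|)|\le|w|$ and the Gaussian has all moments. This gives $\Eset[\widehat{x}_{n,\GSW}]=\sigma\,\mathcal{O}(|\eta_n|)$. The cross term is therefore $|x_n|\,\sigma\,\mathcal{O}(|\eta_n|)=\sigma^2\mathcal{O}(|\eta_n|^2)$, which is $\mathcal{O}(|\eta_n|)$ for fixed $\sigma$.

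Finally, I would control how the third term deviates from its value at $\eta_n=0$. The law of $t=|z_n|^2$ is noncentral $\chi^2$: the density of $2t$ with two degrees of freedom, or equivalently the shifted Gaussian in $w$, depends smoothly on $\eta_n$. The integrand $t g_\lambda^2\le t$ is polynomially bounded. The same dominated-convergence argument therefore makes $\Eset[|\widehat{x}_{n,\GSW}|^2]$ Lipschitz (indeed smooth) in $\eta_n$ near $0$, with deviation $\sigma^2\mathcal{O}(|\eta_n|)$. By symmetry of the density in $\eta_n$ this should actually be $\mathcal{O}(|\eta_n|^2)$, but only the weaker bound is needed.

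The step requiring care is this uniform domination: the indicator $\mathbb{1}\{|w|>\lambda\}$ is discontinuous. I would handle it by differentiating the Gaussian kernel rather than the integrand, so the indicator is never differentiated. The bound $|\partial_{\eta}e^{-|w-\eta|^2}|\le 2(|w|+1)e^{-|w|^2/2+C}$ for $|\eta|\le1$ provides an integrable majorant. Collecting the three terms gives~\eqref{eq:lowSNR_MSE}.
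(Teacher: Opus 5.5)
Your proof is correct and takes essentially the same route as the paper's. Both expand the squared error, use circular symmetry to make the cross term vanish at $\eta_n=0$, identify $\rho_{\GSW}(\lambda)\sigma^2$ as the second moment of $\widehat{x}_{n,\GSW}$ at $\eta_n=0$, and control the $\eta_n$-perturbation through smoothness of the shifted Gaussian law. The differences are only that you work unconditionally, avoiding the paper's conditioning on $\{|z_n|>\lambda\}$ and recombination via $p_n=p_0+\mathcal{O}(|\eta_n|^2)$, and that you explicitly verify the constant through $g_\lambda(r)=\tfrac12\left(1+\sqrt{1-4r^{-2}}\right)$ and justify differentiability by dominated convergence, steps the paper only cites or asserts.
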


\begin{IEEEproof}[Proof sketch]
A detailed derivation is provided in Appendix~\ref{app:lowSNR_GSW}. The proof
relies on expanding the noncentral $\chi^2$ density of $|z_n|^2$ around
$\eta_n = 0$, evaluating the resulting integrals term by term, and isolating the
leading-order contribution to the residual variance as a function of $\lambda$.
\end{IEEEproof}

Proposition~\ref{thm:lowSNR_GSW} shows that, at low SNR, GSW effectively suppresses
noise by pushing most coefficients below the threshold while controlling the
residual variance through $\rho_{\GSW}(\lambda)$. For suitably chosen $\lambda$,
$\rho_{\GSW}(\lambda)$ can be significantly smaller than the unit variance of
the LS estimator, as $\rho_{\GSW}(\lambda)\xrightarrow[\lambda\to\infty]{}0$, leading to substantial gains over naive denoising strategies.

\begin{remark}[Real-valued case]
Propositions~\ref{thm:highSNR_GSW} and~\ref{thm:lowSNR_GSW} extend naturally to the
real-valued case, where $\ux \in \Rset^{N \times 1}$ and
$\uxi \sim \mathcal{N}(\uo,\I_N)$. In that setting,
\[
    p_n = Q(\lambda - \eta_n) + Q(\lambda + \eta_n),
\]
where $Q(x) \triangleq \int_x^\infty \varphi(t)\,\mathrm{d}t$ with
$\varphi(x) \triangleq \frac{1}{\sqrt{2\pi}} e^{-x^2/2}$, and
\begin{equation}\label{eq:rho_GSW_real}
    \rho_{\GSW}(\lambda)
    \triangleq \left(\lambda + \sqrt{\lambda^2 - 4}\right) \varphi(\lambda)
    + e^{-2} Q\left(\sqrt{\lambda^2 - 4}\right) - Q(\lambda).
\end{equation}
The qualitative conclusions regarding the behavior of the GSW estimator remain unchanged.
\end{remark}

The high- and low-SNR regimes characterizations above highlight the role of GSW as a flexible nonlinear shrinkage estimator that balances signal preservation and noise suppression through the tunable parameter $\lambda$. In Section~\ref{sec:simulationresults}, we compare GSW to classical established benchmarks in a simulation experiment of a generic sparse vector denoising scenario.

\vspace{-0.2cm}
\section{Simulation Results}\label{sec:simulationresults}

We illustrate the behavior of GSW in sparse vector denoising. The signal $\ux$ has $K = 10$ non-zero entries (each of unit magnitude) out of $N = 1000$ coordinates. For each noise level, the empirical MSE is averaged over $10^3$ independent realizations. Figure~\ref{fig:simulationresults} shows the MSE versus the inverse noise variance $1/\sigma^2$ (equivalently, in this case, $\mathsf{SNR}(x_n,\sigma)$).

We compare GSW~\eqref{eq:GSWest} to LS~\eqref{eq:lsest}, ST~\cite{donoho1995denoising}, JS~\cite{draper1979ridge}, the original SW estimator~\cite{weiss2021self}, and the MMSE oracle lower bound from~\cite[Eq.~(13)]{weiss2021self}. The threshold is set to $\lambda = 1.1 \sqrt{2 \log N} \approx 4.09$, a slightly inflated universal threshold~\cite[Eq.~(32)]{donoho1995denoising}.

Evidently, GSW achieves in this setting a favorable trade-off across SNR regimes: it closely tracks the MMSE at high SNR while providing robust denoising at low SNR. In particular:
\begin{itemize}
    \item ST performs best among the classical thresholding rules at very low SNR, but GSW overtakes it around $\sim$2\,dB and maintains a noticeable advantage thereafter.
    \item GSW improves upon SW and JS across the entire SNR range, demonstrating the benefit of threshold tunability.
\end{itemize}

\begin{figure}
    \centering
    \includegraphics[width=\columnwidth]{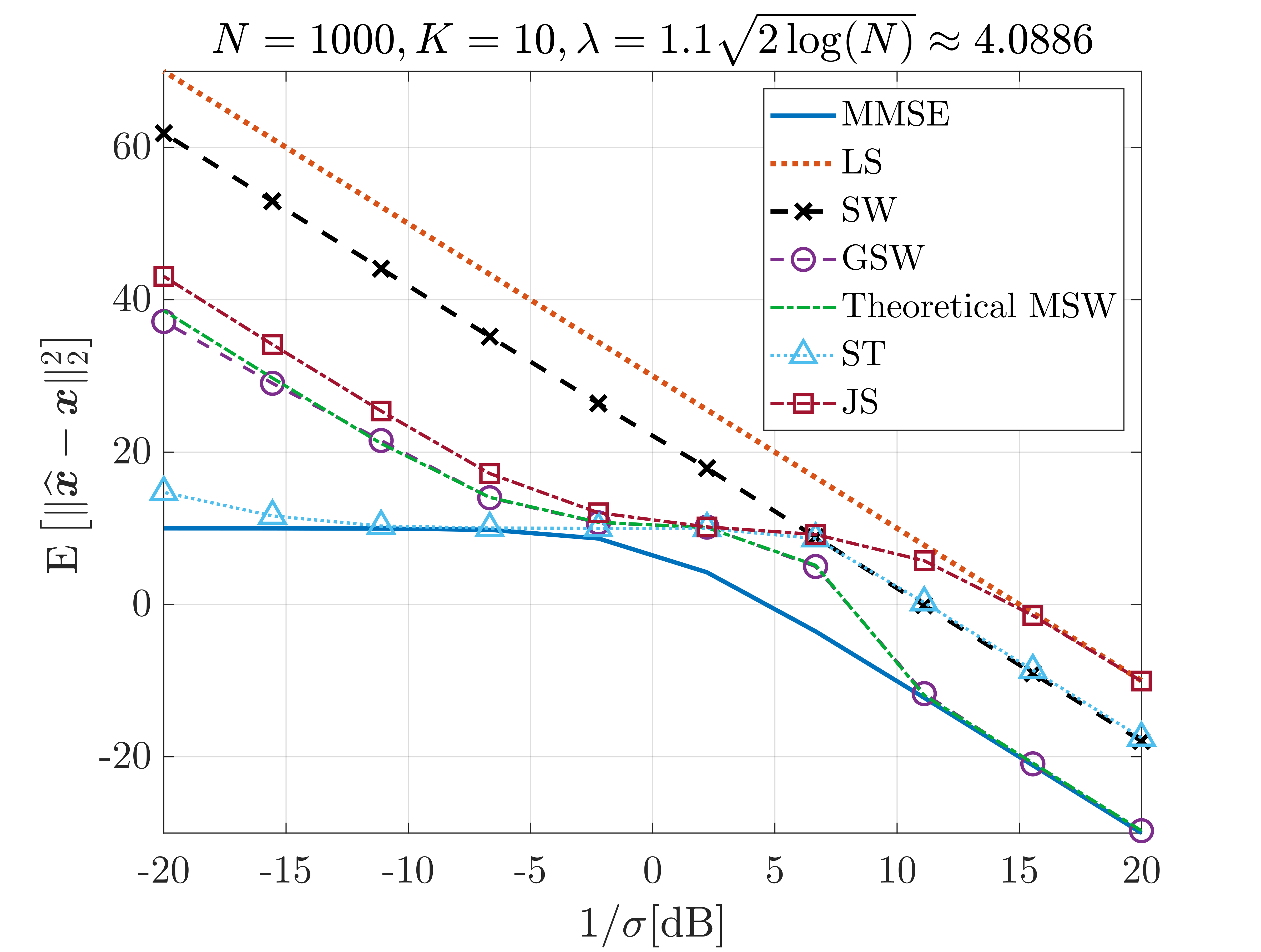}
    \caption{MSE versus inverse noise RMS power for a sparse signal with $K = 10$ non-zero entries (unit magnitude) out of $N = 1000$. Results are averaged over $10^3$ realizations. GSW approaches the oracle MMSE bound at high SNR and outperforms LS, JS, ST, and SW at medium-to-high SNR.}
    \label{fig:simulationresults}\vspace{-0.4cm}
\end{figure}

\section{Conclusion}\label{sec:conclusion}
We have introduced GSW filtering as a tunable-threshold-type shrinkage rule for AWGN denoising, extending the original SW estimator from deconvolution to the canonical signal-plus-noise model. GSW preserves the Wiener-like structure at high SNRs while enjoying an additional degree of freedom in the form of a threshold parameter that enables a controlled trade-off between bias and variance across SNR regimes. We derived closed-form high- and low-SNR MSE expressions, showing that GSW approaches the oracle MMSE performance at high SNR and achieves effective noise suppression at low SNR through the parameter-dependent constant $\rho_{\GSW}(\lambda)$. Simulations on a generic sparse denoising problem confirm these properties and demonstrate improvements over LS, JS, ST, and the original SW rule. These results suggest that GSW is a useful building block for practical transform-domain denoising, including sparse and other structured signal settings.

\appendices
\section{Proof of Proposition~\ref{thm:highSNR_GSW}}\label{app:highSNR_GSW}

We consider a single coordinate and omit the index $n$. Recall that
\begin{equation}
    y = x + \sigma \xi, \qquad
    z = \frac{y}{\sigma} = \eta + \xi,
\end{equation}
with $\xi \sim \mathcal{CN}(0,1)$ and $\eta \triangleq x/\sigma$. The scalar GSW estimate is
\begin{equation}\label{eq:GSW_scalar_app}
    \widehat{x}_{\GSW}
    = y \, g_\lambda(|z|)
    = \sigma(\eta + \xi) g_\lambda(|z|),
\end{equation}
where $g_\lambda(\cdot)$ is given by~\eqref{eq:g_lambda_def}. Define
\begin{equation}
    \mathcal{A} \triangleq \{|z| \le \lambda\},
    \qquad
    \mathcal{A}^{\mathrm{c}} \triangleq \{|z| > \lambda\},
\end{equation}
and
\begin{equation}
    p \triangleq \Pr\left(\mathcal{A}^{\mathrm{c}}\right)
      = \Pr(|z| > \lambda)
      = Q_1\left(\sqrt{2}|\eta|,\sqrt{2}\lambda\right).
\end{equation}

Using the indicator function $\mathbf{1}_{\{\cdot\}}$, the MSE decomposes as
\begin{align}
    \Eset\left[|\widehat{x}_{\GSW} - x|^2\right] &= \Eset\left[|\widehat{x}_{\GSW} - x|^2 \,\mathbf{1}_{\mathcal{A}}\right] \\
     &+ \Eset\left[|\widehat{x}_{\GSW} - x|^2 \,\mathbf{1}_{\mathcal{A}^{\mathrm{c}}}\right].
     \label{eq:MSE_split}
\end{align}

Given $\mathcal{A}$, we have $\widehat{x}_{\GSW} = 0$, hence
\begin{equation}\label{eq:MSE_A}
    \Eset\left[|\widehat{x}_{\GSW} - x|^2 \,\mathbf{1}_{\mathcal{A}}\right]
    = |x|^2\,\Pr(\mathcal{A})
    = (1 - p)\,|x|^2.
\end{equation}
Given $\mathcal{A}^{\mathrm{c}}$, $\widehat{x}_{\GSW}$ is given by~\eqref{eq:GSW_scalar_app} with $|z| > \lambda$, and we can write the error as
\begin{align}
    \widehat{x}_{\GSW} - x = \sigma z g_\lambda(|z|) - x. \label{eq:error_decomp}
\end{align}
Thus
\begin{align}\label{eq:MSE_Ac_def}
    \Eset\left[|\widehat{x}_{\GSW} - x|^2 \,\mathbf{1}_{\mathcal{A}^{\mathrm{c}}}\right] &= \sigma^2\Eset\left[| z g_\lambda(|z|)|^2\,\mathbf{1}_{\mathcal{A}^{\mathrm{c}}}\right] +p|x|^2 \\
    &-2\sigma \Re\left\{x^*\Eset\left[ z g_\lambda(|z|)\,\mathbf{1}_{\mathcal{A}^{\mathrm{c}}}\right]\right\}. \label{eq:MSE_Ac_def_corrected}
\end{align}

We now isolate the leading $1/|z|^2$ behavior of $g_\lambda(\cdot)$. Using
\begin{equation}
    g_\lambda(r)
    = \frac{2 r^{-2}}{1 - \sqrt{1 - 4 r^{-2}}}
    = \frac{1}{2}\left(1 + \sqrt{1 - 4 r^{-2}}\right), \quad r>2,
\end{equation}
a second-order Taylor expansion of $\sqrt{1-4r^{-2}}$ around $r^{-2}=0$ yields
\begin{equation}
    g_\lambda(r) \triangleq 1 - \frac{1}{r^2} + \delta_\lambda(r),
\end{equation}
with a remainder term satisfying
\begin{equation}\label{eq:delta_bound}
    \left|\delta_\lambda(r)\right| \le \frac{C_\lambda}{r^4},
    \qquad r \ge \lambda,
\end{equation}
for some finite constant $C_\lambda$ that depends on $\lambda$ but not on $\eta$ or $\sigma$. Substituting $g_\lambda(|z|) = 1 - |z|^{-2} + \delta_\lambda(|z|)$ into \eqref{eq:MSE_Ac_def}, we first expand the two expectations that appear there. On $\mathcal{A}^{\mathrm{c}}$,
\begin{align}
    z g_\lambda(|z|)
    &= z\left(1 - \frac{1}{|z|^2} + \delta_\lambda(|z|)\right) = z - \frac{z}{|z|^2} + z\,\delta_\lambda(|z|),
    \label{eq:zg_expand}
\end{align}
and
\begin{align}
    &|z g_\lambda(|z|)|^2 \\
    &= |z|^2\left(1 - \frac{1}{|z|^2} + \delta_\lambda(|z|)\right)
              \left(1 - \frac{1}{|z|^2} + \delta_\lambda(|z|)\right)^*\\
              &\triangleq |z|^2 - 2 + r_\lambda(z),
\end{align}
where, for some constant $K_\lambda<\infty$,
\begin{equation}\label{eq:r_bound}
    |r_\lambda(z)| \le \frac{K_\lambda}{|z|^2}, \qquad |z| \ge \lambda.
\end{equation}

Using \eqref{eq:zg_expand} we have
\begin{align}
    \Eset\left[z g_\lambda(|z|)\,\mathbf{1}_{\mathcal{A}^{\mathrm{c}}}\right]
    &= \Eset\left[z\,\mathbf{1}_{\mathcal{A}^{\mathrm{c}}}\right]
       - \Eset\left[\frac{z}{|z|^2}\,\mathbf{1}_{\mathcal{A}^{\mathrm{c}}}\right] \\
       &+ \Eset\left[z\,\delta_\lambda(|z|)\,\mathbf{1}_{\mathcal{A}^{\mathrm{c}}}\right],
    \label{eq:E_zg_split}
\end{align}
and from the expression for $|z g_\lambda(|z|)|^2$,
\begin{align}
    \Eset\left[|z g_\lambda(|z|)|^2\,\mathbf{1}_{\mathcal{A}^{\mathrm{c}}}\right]
    &= \Eset\left[|z|^2\mathbf{1}_{\mathcal{A}^{\mathrm{c}}}\right]- 2p
       + \Eset\left[r_\lambda(z)\,\mathbf{1}_{\mathcal{A}^{\mathrm{c}}}\right].
       \label{eq:E_zg2_split}
\end{align}

We now use asymptotic expansions of the various terms as $|\eta|\to\infty$.
Recall $z = \eta + \xi$ with $\xi\sim\mathcal{CN}(0,1)$. Then
\begin{equation}
    \Eset[z] = \eta, \qquad \Eset[|z|^2] = |\eta|^2 + 1.
\end{equation}
Moreover, the event $\{|z|\le\lambda\}$ lies in a fixed ball around the origin,
while $z$ is centered at $\eta$, hence there exist constants $c,C>0$ such that
\begin{equation}\label{eq:Ac_prob_small}
    \Pr(\mathcal{A}) = \Pr(|z|\le\lambda) \le C e^{-c|\eta|^2},
    \qquad |\eta|\to\infty.
\end{equation}
Since $\Eset\left[f(z)\,\mathbf{1}_{\mathcal{A}^{\mathrm{c}}}\right] = \Eset[f(z)] - \Eset\left[f(z)\,\mathbf{1}_{\mathcal{A}}\right]$, it follows that
\begin{align}
    \Eset\left[z\,\mathbf{1}_{\mathcal{A}^{\mathrm{c}}}\right]
    &= \eta + \mathcal{O}\left(e^{-c|\eta|^2}\right), \label{eq:Ez_Ac}\\
    \Eset\left[|z|^2\mathbf{1}_{\mathcal{A}^{\mathrm{c}}}\right]
    &= |\eta|^2 + 1 + \mathcal{O}\left(e^{-c|\eta|^2}\right), \label{eq:Ez2_Ac}\\
    p &= 1 + \mathcal{O}\left(e^{-c|\eta|^2}\right). \label{eq:p_Ac}
\end{align}

Next, we bound the terms involving $|z|^{-2}$ and $\delta_\lambda(\cdot)$. For $|z|\ge\lambda$, from \eqref{eq:delta_bound}, we have
\begin{equation}
    \left|z\,\delta_\lambda(|z|)\right|
    \le \frac{C_\lambda}{|z|^3}, \qquad
    |r_\lambda(z)| \le \frac{K_\lambda}{|z|^2}.
\end{equation}
Similarly to the technique in \cite[Lemma~1]{weiss2021self}, using Taylor expansions around $z=\eta$, one obtains
\begin{align}
    \Eset\left[\frac{z}{|z|^2}\,\mathbf{1}_{\mathcal{A}^{\mathrm{c}}}\right]
    &= \frac{1}{\eta^*}
       + \mathcal{O}\left(\frac{1}{|\eta|^3}\right), \label{eq:Ez_over_r2}\\
    \Eset\left[z\,\delta_\lambda(|z|)\,\mathbf{1}_{\mathcal{A}^{\mathrm{c}}}\right]
    &= \mathcal{O}\left(\frac{1}{|\eta|^3}\right), \label{eq:Ez_delta}\\
    \Eset\left[r_\lambda(z)\,\mathbf{1}_{\mathcal{A}^{\mathrm{c}}}\right]
    &= \mathcal{O}\left(\frac{1}{|\eta|^2}\right), \label{eq:Er_lambda}
\end{align}
as $|\eta|\to\infty$ by using that $\xi$ has finite moments of all orders and that $\Pr(\mathcal{A})$ (in \eqref{eq:Ac_prob_small}) is exponentially small in $|\eta|$.

Substituting \eqref{eq:Ez_Ac}, \eqref{eq:Ez_over_r2}, and \eqref{eq:Ez_delta}
into \eqref{eq:E_zg_split}, and recalling $x = \sigma\eta$, we get
\begin{align}
    \Eset\left[z g_\lambda(|z|)\,\mathbf{1}_{\mathcal{A}^{\mathrm{c}}}\right]
    &= \eta
       - \frac{1}{\eta^*}
       + \mathcal{O}\left(\frac{1}{|\eta|^3}\right),
\end{align}
and hence
\begin{align}
    &-2\sigma\,\Re\left\{
         x^* \Eset\left[z g_\lambda(|z|)\,\mathbf{1}_{\mathcal{A}^{\mathrm{c}}}\right]
       \right\} \\
    &= -2\sigma\,\Re\left\{
         (\sigma\eta)^*
         \left(\eta - \frac{1}{\eta^*}\right)
       \right\}
       + \mathcal{O}\left(\frac{\sigma^2}{|\eta|^2}\right) \\
    &= -2|x|^2 +2\sigma^2 + \mathcal{O}\left(\frac{\sigma^2}{|\eta|^2}\right).
    \label{eq:first_term_final}
\end{align}

Similarly, substituting \eqref{eq:Ez2_Ac}, \eqref{eq:p_Ac}, and
\eqref{eq:Er_lambda} into \eqref{eq:E_zg2_split} yields
\begin{align}
    \Eset\left[|z g_\lambda(|z|)|^2\,\mathbf{1}_{\mathcal{A}^{\mathrm{c}}}\right]
    &= |\eta|^2 - 1 + \mathcal{O}\left(\frac{1}{|\eta|^2}\right),
\end{align}
and therefore
\begin{equation}\label{eq:second_term_final}
    \sigma^2 \Eset\left[|z g_\lambda(|z|)|^2\,\mathbf{1}_{\mathcal{A}^{\mathrm{c}}}\right]
    = |x|^2 - \sigma^2 + \mathcal{O}\left(\frac{\sigma^2}{|\eta|^2}\right).
\end{equation}

Combining \eqref{eq:first_term_final}, \eqref{eq:second_term_final} and
\begin{align}
    p|x|^2 &= |x|^2 + \mathcal{O}\left(\frac{\sigma^2}{|\eta|^2}\right),\\
    \sigma^2 &= p\sigma^2 +(1-p)\sigma^2 = p\sigma^2 + \mathcal{O}\left(e^{-c|\eta|^2}\right),
\end{align}
in \eqref{eq:MSE_Ac_def_corrected}, we obtain
\begin{equation}\label{eq:MSE_Ac_final}
    \Eset\left[|\widehat{x}_{\GSW} - x|^2 \,\mathbf{1}_{\mathcal{A}^{\mathrm{c}}}\right]
    = p\,\sigma^2 + \mathcal{O}\left(\frac{\sigma^2}{|\eta|^2}\right).
\end{equation}

Substituting \eqref{eq:MSE_A} and \eqref{eq:MSE_Ac_final} into the decomposition
\eqref{eq:MSE_split} yields
\begin{equation}
    \Eset\left[|\widehat{x}_{\GSW} - x|^2\right]
    = (1 - p)\,|x|^2 + p\,\sigma^2
      + \mathcal{O}\left(\frac{\sigma^2}{|\eta|^2}\right),
\end{equation}
which is precisely \eqref{eq:highSNR_MSE} in Proposition~\ref{thm:highSNR_GSW}.

\section{Proof of Proposition~\ref{thm:lowSNR_GSW}}\label{app:lowSNR_GSW}

We again work with a single coordinate, omit the index $n$ and use the same definitions as in Appendix \ref{app:highSNR_GSW}. Recall the MSE decomposition 
\begin{align}
    \hspace{-0.075cm}\Eset\left[|\widehat{x}_{\GSW} - x|^2\right]
    &= (1-p)\,|x|^2
      + p\,\Eset\left[|\widehat{x}_{\GSW} - x|^2 \mid \mathcal{A}^{\mathrm{c}}\right].
    \label{eq:MSE_low_split}
\end{align}
Since $z\sim\mathcal{CN}(\eta,1)$, the probability $p = \Pr(|z|>\lambda)$
is analytic in $|\eta|$ and, expanding around $\eta=0$, we obtain
(cf. \cite[Eq.~(50)--(51)]{weiss2021self})
\begin{equation}
    p = p_0 + \mathcal{O}\left(|\eta|^2\right), \qquad
    p_0 \triangleq \Pr(|\xi|>\lambda) \in (0,1).
    \label{eq:p_low_expansion}
\end{equation}

Write the conditional MSE conditioned on $\mathcal{A}^{\mathrm{c}}$ as
\begin{align}
    \Eset\left[|\widehat{x}_{\GSW} - x|^2 \mid \mathcal{A}^{\mathrm{c}}\right]
    &= |x|^2
       + \sigma^2\Eset\left[| z g_\lambda(|z|)|^2 \mid \mathcal{A}^{\mathrm{c}}\right] \label{eq:abssquaredterm}\\
       &- 2\sigma^2\Re\left\{\eta^* \Eset\left[z g_\lambda(|z|) \mid \mathcal{A}^{\mathrm{c}}\right]\right\}.
       \label{eq:MSE_low_cond_expand}
\end{align}
Since the map $\eta \mapsto \Eset[z g_\lambda(|z|) \mid \mathcal{A}^{\mathrm{c}}]$ is analytic, we may write it (near $\eta=0$) as
\begin{equation}
    \Eset\left[z g_\lambda(|z|)  \mid \mathcal{A}^{\mathrm{c}}\right] = \left.\Eset\left[z g_\lambda(|z|)  \mid \mathcal{A}^{\mathrm{c}}\right]\right|_{\eta=0} + \mathcal{O}\left(|\eta|\right).
\end{equation}
However, at $\eta=0$ the distribution of $z$ is rotationally symmetric ($\left.z\right|_{\eta=0}=\xi\sim\mathcal{CN}(0,1)$) while $g_\lambda(|z|)$ depends only on $|z|$. Hence
\begin{equation}
    \left.\Eset\left[z g_\lambda(|z|)  \mid \mathcal{A}^{\mathrm{c}}\right]\right|_{\eta=0} = 0 \, \Rightarrow \, \Eset\left[z g_\lambda(|z|)  \mid \mathcal{A}^{\mathrm{c}}\right] = \mathcal{O}\left(|\eta|\right),
    \label{eq:Exhat_low}
\end{equation}
so the cross term in~\eqref{eq:MSE_low_cond_expand} contributes $\mathcal{O}(|x|\,|\eta|)=\mathcal{O}(|\eta|^2)$.

Proceeding to the second term in \eqref{eq:abssquaredterm}, we have
\begin{align}
    &\sigma^2\Eset\left[| z g_\lambda(|z|)|^2 \mid \mathcal{A}^{\mathrm{c}}\right]\\
    &= \sigma^2\,
       \Eset\left[|\xi|^2 |g_\lambda(|\xi|)|^2 \mid |\xi|>\lambda\right]
       + \mathcal{O}\left(|\eta|\right) \\
    &= \sigma^2\cdot\frac{\rho_{\GSW}(\lambda)}{p_0} + \mathcal{O}\left(|\eta|\right),
    \label{eq:Exhat2_low}
\end{align}
where, by explicit radial integration (see~\cite[Appendix~A]{weiss2021self}), we obtain the constant \eqref{eq:rho_GSW_complex},
which depends only on $\lambda$.

Substituting~\eqref{eq:Exhat_low} and \eqref{eq:Exhat2_low} into \eqref{eq:MSE_low_cond_expand} gives
\begin{equation}
    \Eset\left[|\widehat{x}_{\GSW} - x|^2 \mid \mathcal{A}^{\mathrm{c}}\right]
    = |x|^2 + \rho_{\GSW}(\lambda)\sigma^2
      + \mathcal{O}\left(|\eta|\right).
\end{equation}
Finally, inserting this and~\eqref{eq:p_low_expansion} into
\eqref{eq:MSE_low_split} yields
\begin{align}
    \Eset\left[|\widehat{x}_{\GSW} - x|^2\right]
    &= (1-p)\,|x|^2\\
       &+ p\left(|x|^2 + \sigma^2\cdot\frac{\rho_{\GSW}(\lambda)}{p_0}
                 + \mathcal{O}\left(|\eta|\right)\right) \\
    &= |x|^2 + \sigma^2\cdot\rho_{\GSW}(\lambda)
       + \mathcal{O}\left(|\eta|\right),
\end{align}
which is the claimed low-SNR expansion \eqref{eq:lowSNR_MSE}.

\bibliographystyle{IEEEbib}
\bibliography{./Inputs/refs}

\end{document}